\renewcommand{\>}{\rangle}
\newcommand\be{\begin{equation}}
\newcommand\ee{\end{equation}}
\newcommand\ot{\otimes}
\newcommand\bea{\begin{array}}
\newcommand\eea{\end{array}}
\newcommand\ben{\begin{eqnarray}}
\newcommand\een{\end{eqnarray}}
\newcommand\bei{\begin{itemize}}
\newcommand\eei{\end{itemize}}
\newcommand\bee{\begin{enumerate}}
\newcommand\eee{\end{enumerate}}
\def\ot{\otimes}
\def\bei{\begin{itemize}}
\def\eei{\end{itemize}}
\newtheorem{theorem}{Theorem}
\newtheorem{fact}[theorem]{Fact}
\newtheorem{lemma}[theorem]{Lemma}
\newtheorem{notation}[theorem]{Notation}
\newtheorem{observation}[theorem]{Observation}
\newtheorem*{rep@theorem}{\rep@title}
\newcommand{\newreptheorem}[2]{%
\newenvironment{rep#1}[1]{%
 \def\rep@title{#2 \ref{##1} (restatement)}%
 \begin{rep@theorem}}%
 {\end{rep@theorem}}}
\par\addvspace{\medskipamount}\noindent\textbf{Examples.}\hspace{1ex}}%
\begin{document}
\title{Construction and properties of a class of private states in arbitrary dimensions}

\author{Adam Rutkowski$^{1,2,*}$, Micha{\l} Studzi\'nski$^{1,2}$, Piotr \'Cwikli\'nski$^{1,2}$ and Micha{\l} Horodecki$^{1,2}$}
\affiliation{
$^1$ Institute of Theoretical Physics and Astrophysics, University of Gda\'nsk, 80-952 Gda\'nsk, Poland \\
$^2$ National Quantum Information Centre of Gda\'nsk, 81-824 Sopot, Poland
}

\date{\today}

\begin{abstract}
We present a construction of quantum states in dimension $d$ that has at least 1 dit of ideal key, called private dits (pdits), which covers most of the known examples of private bits (pbits) $d=2$. We examine properties of this class of states, focusing mostly on its distance to the set of separable states $\mathcal{SEP}$, showing that for a fixed dimension of key part $d_k$ the distance increases with $d_s$. We provide explicit examples of PPT states (in $d$ dimensions) which are nearly as far from separable ones as possible. Precisely, the distance from the set of $\mathcal{SEP}$ is $2 - \epsilon$, where $d$ scales with $\epsilon$ as $d \propto 1/\epsilon^3$, as opposed to $d \propto 2^{(log(4/\epsilon))^2}$ obtained in [Badzi\c{a}g et al., Phys. Rev. A 90, 012301 (2014)]. We do not use boosting (taking many copies of pdits to boost the distance) as in Badzi\c{a}g et al. paper. \end{abstract}

\pacs{03.67.-a, 03.67.Dd, 03.67.Hk}
\keywords{private states, private bits, pbit, pdit, separable state, entangled state}

\maketitle
\let\oldthefootnote\thefootnote
\renewcommand{\thefootnote}{\fnsymbol{footnote}}
\footnotetext[1]{email: \url{fizar@ug.edu.pl}}
\let\thefootnote\oldthefootnote

\section{Introduction}
Quantum cryptography allows  perfect secrets sharing among  honest parties and is, up-to-date, the most successful and commercial branch among quantum information science. In 2007, quantum cryptography has been used to secure part of the vote counting in a referendum in the canton of Geneva and in 2010, in collaboration with the University of Kwazulu-Natal, South Africa, to encrypt a connection in the Durban stadium during the football World Cup in 2010. But, what is the source of its power? Briefly speaking, the fundamental
property which guarantees security of the quantum cryptography is that
if one does not know the state of a qubit, then with a high probability one disturbs
the state while trying to get to know it.

This implies there is a clear relation between quantum security and correlations in the form of quantum entanglement. If such correlations are maximal, between two qubits, they can be changed via measurement
into one bit of a secret key (also called 'classical' key). First protocols of quantum key distribution were  based only on pure entangled states \cite{BB84, Ekert91, BB84_exp} as well as, security proof \cite{ShorPreskill}, which have led to natural
expectations that pure entangled quantum states are the only source of quantum
security \cite{GisinWolf_linking, WolfGisin}.
However, we know that entanglement can be manifested not only in a pure form, but also in a mixed one. What is more, there are some mixed entangled
quantum states from which no pure entangled states can be obtained using local operations and classical communication ($LOCC$), called bound entangled states \cite{bound, refId0}. It was hoped that bound states are useless for quantum cryptography - no key would be distillable from the classical distribution. But, the quite surprising at that time, discovery of private bound entangled states, has tempered those hopes and demonstrated a clear distinction between secrecy and bound entanglement \cite{pptkey}.

The key ingredient in showing that distinction was the notion of private states (introduced in \cite{pptkey}), quantum states that contain
directly accessible, ideally secure classical key, and
private bits, p-bit - or more generally a private dit, pdit
 - which is a delocalized maximally entangled state
that still retains some entanglement monogamy result. A
quantum p-dit is composed from a $d \otimes d$ $AB$ part called "key", and
$A'B'$ called "shield", shared between Alice (subsystems
$AA'$) and Bob (subsystems $BB'$) in such a way that the
local von Neumann measurements on the key part in a
particular basis will make its results completely statistically
uncorrelated from the results of any measurement
of an eavesdropper Eve on her subsystem $E$, which is
a part of the purification $|\Psi\>_{ABA'B'E}$ of the p-dit state $\rho_{ABA'B'}$. Pdits (especially pbits), have been studied extensively for some time \cite{Pankowski2011_pbit, AugusiakH2008-multi, smallkey, keyhuge, PhysRevLett.106.030501, PhysRevA.85.012330}.

Quite recently, an important discovery has been made in studies between security and correlations. In \cite{Ozols2014_pbit}, a clean classical analogue of bound entanglement and private bound entanglement has been provided, where the authors have constructed private bound entangled states based on unambiguous classical probability distribution to a quantum state that is not based on a "standard" key/shield scheme, opening a new direction in studies of private states.

Our paper is organized in the following way. In Sec.~\ref{II} we present a general construction of the new class of pdits and show that for specific choices of parameters we can reduce this new class to the cases previously known in the literature. In Sec.~\ref{III} we investigate properties of the new set of pdits. Namely we calculate the trace distance of arbitrary pdits from the new class from the pdit in maximally entangled form (Lemma~\ref{thm:main}). We also show that for the specific subclass this distance scales inversely with the dimension of the shield part $d_s$ (Lemma~\ref{depOnds}). At the end of this section, we give the lower bound for the trace distance from the set of separable states $\mathcal{SEP}$ and our subclass (Lemma~\ref{distSep}) which gives better estimation than the previous one~\cite{Badziag_pbits}.
What is the most important, we are able to show that for particular subclass of pdits, we do not need to take many copies of pdits to boost the distance from the set of separable set $\mathcal{SEP}$ (like in~\cite{Badziag_pbits}) using our construction. We also show that our family of states  approximate the set of separable states obtaining the distance equal to $2 - \epsilon$ and improving the scaling of $\epsilon$ with the distance. Additionally, we present two appendices in which we  describe a special method which allows us to prove one of the crucial statements in our paper, i.e. Lemma~\ref{depOnds} (Appendix~\ref{tenMet}). In  Appendix~\ref{XY} we remind the special construction of the set of operators which is one of the possible realizations of operators with desired spectra needed in Sec.~\ref{III}.

\section{General construction of pdits}
\label{II}
As we have mentioned in the Introduction we want to construct a four partite state $\rho_{ABA'B'}$ (pdit) which has $\operatorname{PPT}$ property and it is close to pdits in the so called maximally entangled form (see Section~\ref{III}). Let us consider the following state:
\be
\label{1}
\rho_{ABA'B'}=\sum_{l=0}^d \omega_l \in \mathcal{B}\left(\mathcal{H}_{d_k} \otimes \mathcal{H}_{d_k} \otimes \mathcal{H}_{d_s} \otimes \mathcal{H}_{d_s} \right),
\ee
where $\mathcal{B}(\mathcal{H})$ is the algebra of all bounded linear operators on Hilbert space $\mathcal{H}$, $d=\frac{1}{2}d_k(d_k-1)$ and by $d_k$ we denote the dimension of the key part acting on $AB$ and by $d_s$ the dimension of the shield part acting on $A'B'$. Now we describe each of the components from Eq.~\eqref{1}. First of all, we define the term $\omega_0$ as:
\be
\label{2}
\omega_0=\sum_{i,j=0}^{d_k-1}|i\>\<j|\otimes |i\>\<j| \otimes a_{ij}^{(0,0)},
\ee
where every $a_{ij}^{(0,0)}\in\mathcal{B}\left(\mathcal{H}_{d_s}\otimes \mathcal{H}_{d_s}\right)$. From now, every matrix of the form~\eqref{2} we will call matrix in the maximally entangled form.
The rest of elements $\omega_l$, for $1\leq l \leq \frac{1}{2}d_k(d_k-1)$ from Eq.~\eqref{1} are given by the following formula
\be
\label{3}
\begin{split}
\omega_l&=|i\>\<i|\otimes |j\>\<j| \otimes a_{00}^{(i,j)}+|i\>\<j|\otimes |j\>\<i| \otimes a_{01}^{(i,j)}+\\ &+|j\>\<i|\otimes |i\>\<j|\otimes a_{10}^{(i,j)}+|j\>\<j|\otimes |i\>\<i|\otimes a_{11}^{(i,j)},
\end{split}
\ee
where $i,j=1,\ldots, d_k-1$ and $i<j$. In the above we also implicitly assume bijection function between indices $l$ and $i,j$.

Let us introduce the following notation, namely:
\be
\label{ij}
A^{(i,j)}=\begin{pmatrix} a^{(i,j)}_{00} & a^{(i,j)}_{01}\\ a^{(i,j)}_{10} & a^{(i,j)}_{11}  \end{pmatrix},
\ee
where $i,j=0,\ldots, d_k-1$ for $i<j$. Separately, for the term $A^{(0,0)}$, we have
\be
\label{00}
A^{(0,0)}=\begin{pmatrix} a_{00}^{(0,0)} & \cdots & a^{(0,0)}_{0,d_k-1}\\ \vdots & \ddots & \vdots \\ a^{(0,0)}_{d_k-1,0} & \cdots & a^{(0,0)}_{d_k-1,d_k-1} \end{pmatrix}.
\ee
Then, there is an explicit connection between positivity of the state $\rho_{ABA'B'}$ and each submatrix $ A^{(i,j)}$ and positivity of $\rho_{ABA'B'}^{\operatorname{T}_{A'}\operatorname{T}_{B'}}$ and each block $A^{(i,j)}$ after partial transposition on the system $B'$. This can be summarized as follows
\begin{observation}
\label{obs1}
We have the following relations between positivity of the state $\rho_{ABA'B'}$ before and after partial transposition and positivity properties of every block $A^{(i,j)}$:
\begin{enumerate}
\item Positivity of the state $\rho_{ABA'B'}$
\be
\rho_{ABA'B'}\geq 0 \Leftrightarrow A^{(i,j)}\geq 0,
\ee
\item Positivity of the state $\rho_{ABA'B'}$ with respect to partial transposition in the cut $AB:A'B'$
\be
\left( \text{\(\mathds{1}\)}_{A} \otimes \operatorname{T}_B \otimes \text{\(\mathds{1}\)}_{A'} \otimes \operatorname{T}_{B'}\right) \rho_{ABA'B'} \geq 0 \Leftrightarrow \widetilde{A}^{(i,j)}\geq 0,
\ee
where $\widetilde{A}^{(i,j)}$  is given by
\[
\tilde{A}^{(i,j)}=\left(\begin{array}{cc}
\tilde{a}_{00}^{(i,j)} & \tilde{a}_{ij}^{(0,0)}\\
\tilde{a}_{ji}^{(0,0)} & \tilde{a}_{11}^{(i,j)}
\end{array}\right),\quad i,j=0,\ldots,d_{k}-1 \ \text{with} \  i<j,
\]
and
\[
\tilde{A}_{ij}^{(0,0)}=\begin{cases}
\tilde{a}_{ij}^{(0,0)},\qquad & i=j\\
\tilde{a}_{01}^{(i,j)},\qquad & i<j\\
\tilde{a}_{10}^{(i,j)},\qquad & i>j
\end{cases},\quad i,j=0,\ldots,d_{k}-1.
\]
In the above, we have $\widetilde{a}_{00}^{(i,j)}=\left(\text{\(\mathds{1}\)}_{B} \otimes \operatorname{T}_{B'} \right)a_{00}^{(i,j)}$ and so on.
\end{enumerate}
\end{observation}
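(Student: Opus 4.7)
The plan is to exhibit a block structure of $\rho_{ABA'B'}$ with respect to an orthogonal decomposition of the key space on $AB$, and then read off the blocks as the matrices $A^{(i,j)}$ (or $\widetilde{A}^{(i,j)}$ after partial transposition). The decomposition to use is
\[
\mathcal{H}_{d_k}\otimes\mathcal{H}_{d_k}=V_0\oplus\bigoplus_{0\le i<j\le d_k-1}V_{ij},\qquad V_0=\operatorname{span}\{|ii\rangle\},\quad V_{ij}=\operatorname{span}\{|ij\rangle,|ji\rangle\},
\]
and the basic fact I would invoke is that an element of $\mathcal{B}(V)\otimes\mathcal{B}(\mathcal{H}_{d_s}\otimes\mathcal{H}_{d_s})$ which is block-diagonal across an orthogonal decomposition of $V$ is positive semidefinite if and only if each block is.

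For part (1), I would simply inspect definitions~\eqref{2}--\eqref{3}: $\omega_0$ is built entirely from $|ii\rangle\langle jj|$ on $AB$ and so lives in $\mathcal{B}(V_0)\otimes\mathcal{B}(\mathcal{H}_{d_s}\otimes\mathcal{H}_{d_s})$, whereas each $\omega_l$ associated with the pair $i<j$ is built from $|ij\rangle\langle ij|$, $|ij\rangle\langle ji|$, $|ji\rangle\langle ij|$, $|ji\rangle\langle ji|$ and so lies in $\mathcal{B}(V_{ij})\otimes\mathcal{B}(\mathcal{H}_{d_s}\otimes\mathcal{H}_{d_s})$. Writing the blocks in the natural bases $\{|ii\rangle\}$ of $V_0$ and $\{|ij\rangle,|ji\rangle\}$ of $V_{ij}$ reproduces exactly the matrices $A^{(0,0)}$ of~\eqref{00} and $A^{(i,j)}$ of~\eqref{ij} as the diagonal blocks of $\rho_{ABA'B'}$, so (1) follows from the block-diagonal criterion.

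For part (2), the key observation is how $\operatorname{T}_B$ acts on the $AB$ rank-one terms: it fixes the ``diagonal'' terms $|ii\rangle\langle ii|$ and $|ij\rangle\langle ij|$ but swaps the ``off-diagonal'' pairs $|ii\rangle\langle jj|\leftrightarrow|ij\rangle\langle ji|$. Hence the decomposition above is still preserved by $\operatorname{T}_B\otimes\operatorname{T}_{B'}$, but the off-diagonal contributions are interchanged between the two types of blocks: the new $V_{ij}$-block collects $\tilde{a}^{(i,j)}_{00},\tilde{a}^{(i,j)}_{11}$ from the transposed $\omega_l$ together with $\tilde{a}^{(0,0)}_{ij},\tilde{a}^{(0,0)}_{ji}$ brought in from $\omega_0$, while the new $V_0$-block retains the diagonal $\tilde{a}^{(0,0)}_{ii}$ but receives $\tilde{a}^{(i,j)}_{01},\tilde{a}^{(i,j)}_{10}$ at the $(i,j)$ and $(j,i)$ positions from the $\omega_l$'s. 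The extra $\operatorname{T}_{B'}$ acts only on the shield and merely decorates every coefficient by a tilde. In the bases $\{|ii\rangle\}$ and $\{|ij\rangle,|ji\rangle\}$, these rearranged blocks are exactly $\widetilde{A}^{(0,0)}$ (with the stated case distinction on the entries) and $\widetilde{A}^{(i,j)}$, so another application of the block-diagonal criterion yields (2). The only delicate step is the bookkeeping of where each of the four rank-one $AB$-terms of every $\omega_l$ (and every off-diagonal term of $\omega_0$) is rerouted by $\operatorname{T}_B$; no deeper tool is required.
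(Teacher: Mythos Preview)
Your proposal is correct and takes essentially the same approach as the paper: the paper's proof is a one-sentence remark that the components $\omega_l$ live on mutually orthogonal subspaces of $\mathcal{H}_{d_k}\otimes\mathcal{H}_{d_k}$ and hence positivity and PPT can be checked blockwise. Your write-up simply makes this explicit by naming the decomposition $V_0\oplus\bigoplus_{i<j}V_{ij}$ and carrying out the $\operatorname{T}_B$-bookkeeping that the paper leaves implicit, so there is no substantive difference in method.
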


\begin{proof}
The proof of above statement is based on straightforward observation. Namely one can notice that every component of the state from equation~\eqref{1} is defined on different subspaces which are orthogonal to each other, thus every block can be treated separately - we can consider positivity and PPT conditions on each of the component independently. This fact implies all claimed properties of states $\rho_{ABA'B'}$ from~\eqref{1}.
\end{proof}

At the end of this section we show for which choices of matrices $\omega_0$ and $\omega_l$ we can reduce our general construction, given by formulas~\eqref{1},~\eqref{2} and~\eqref{3}, to the previously known cases. First let us write general matrix expressions for state $\rho_{ABA'B'}$ from the formula~\eqref{1} when the dimension of the key part is $d_k=2,3$. Namely for $d_k=2$ we have
\be
\rho_{ABA'B'}=\omega_0+\omega_1,
\ee
where
\be
\omega_0=\left(\begin{array}{cc|cc}a_{00}^{(0,0)} & \cdot & \cdot & a_{01}^{(0,0)}\\
\cdot & \cdot & \cdot & \cdot \\
\hline
\cdot & \cdot & \cdot & \cdot \\
a_{10}^{(0,0)} & \cdot & \cdot & a_{11}^{(0,0)} \end{array}\right), \ \omega_1=\left(\begin{array}{cc|cc} \cdot & \cdot & \cdot & \cdot\\
\cdot & a_{00}^{(0,1)} & a_{01}^{(0,1)} & \cdot \\ \hline
\cdot & a_{10}^{(0,1)} & a_{11}^{(0,1)} & \cdot \\
\cdot & \cdot & \cdot & \cdot\end{array}\right)
\ee

For $d_k=3$ state $\rho_{ABA'B'}$ is represented as
\be
\rho_{ABA'B'}=\omega_0+\omega_1+\omega_2+\omega_3 \in \mathcal{B}\left(\mathbb{C}^3 \ot \mathbb{C}^3 \ot \mathbb{C}^{d_s} \ot \mathbb{C}^{d_s}\right),
\ee
where
\begin{widetext}
\be
\begin{split}
\rho_0&= \left(\begin{array}{ccc|ccc|ccc}a_{00}^{(0,0)} & \cdot & \cdot & \cdot & a_{01}^{(0,0)} & \cdot & \cdot & \cdot & a_{02}^{(0,0)}\\
\cdot & \cdot & \cdot & \cdot & \cdot & \cdot & \cdot & \cdot & \cdot\\ \cdot & \cdot & \cdot & \cdot & \cdot & \cdot & \cdot & \cdot & \cdot\\ \hline
\cdot & \cdot & \cdot & \cdot & \cdot & \cdot & \cdot & \cdot & \cdot \\
a_{10}^{(0,0)} & \cdot & \cdot & \cdot & a_{11}^{(0,0)} & \cdot & \cdot & \cdot & a_{12}^{(0,0)}\\
\cdot & \cdot & \cdot & \cdot & \cdot & \cdot & \cdot & \cdot & \cdot\\ \hline
\cdot & \cdot & \cdot & \cdot & \cdot & \cdot & \cdot & \cdot & \cdot\\
\cdot & \cdot & \cdot & \cdot & \cdot & \cdot & \cdot & \cdot & \cdot\\
a_{20}^{(0,0)} & \cdot & \cdot & \cdot & a_{21}^{(0,0)} & \cdot & \cdot & \cdot & a_{22}^{(0,0)}
\end{array} \right), \
\rho_1=\left(\begin{array}{ccc|ccc|ccc}\cdot & \cdot & \cdot & \cdot & \cdot & \cdot & \cdot & \cdot & \cdot\\
\cdot & a_{00}^{(0,1)} & \cdot & a_{01}^{(0,1)} & \cdot & \cdot & \cdot & \cdot & \cdot\\
\cdot & \cdot & \cdot & \cdot & \cdot & \cdot & \cdot & \cdot & \cdot\\ \hline
\cdot & a_{10}^{(0,1)} & \cdot & a_{11}^{(0,1)} & \cdot & \cdot & \cdot & \cdot & \cdot \\
\cdot & \cdot & \cdot & \cdot & \cdot & \cdot & \cdot & \cdot & \cdot \\
\cdot & \cdot & \cdot & \cdot & \cdot & \cdot & \cdot & \cdot & \cdot\\ \hline
\cdot & \cdot & \cdot & \cdot & \cdot & \cdot & \cdot & \cdot & \cdot\\
\cdot & \cdot & \cdot & \cdot & \cdot & \cdot & \cdot & \cdot & \cdot\\
\cdot & \cdot & \cdot & \cdot & \cdot & \cdot & \cdot & \cdot & \cdot
\end{array} \right), \
\rho_2=\left(\begin{array}{ccc|ccc|ccc}\cdot & \cdot & \cdot & \cdot & \cdot & \cdot & \cdot & \cdot & \cdot\\
\cdot & \cdot & \cdot & \cdot & \cdot & \cdot & \cdot & \cdot & \cdot\\
\cdot & \cdot & a_{00}^{(0,2)} & \cdot & \cdot & \cdot & a_{01}^{(0,2)} & \cdot & \cdot\\ \hline
\cdot & \cdot & \cdot & \cdot & \cdot & \cdot & \cdot & \cdot & \cdot \\
\cdot & \cdot & \cdot & \cdot & \cdot & \cdot & \cdot & \cdot & \cdot \\
\cdot & \cdot & \cdot & \cdot & \cdot & \cdot & \cdot & \cdot & \cdot\\ \hline
\cdot & \cdot & a_{10}^{(0,2)} & \cdot & \cdot & \cdot & a_{11}^{(0,2)} & \cdot & \cdot\\
\cdot & \cdot & \cdot & \cdot & \cdot & \cdot & \cdot & \cdot & \cdot\\
\cdot & \cdot & \cdot & \cdot & \cdot & \cdot & \cdot & \cdot & \cdot
\end{array} \right),\\
\rho_3&=\left(\begin{array}{ccc|ccc|ccc}\cdot & \cdot & \cdot & \cdot & \cdot & \cdot & \cdot & \cdot & \cdot\\
\cdot & \cdot & \cdot & \cdot & \cdot & \cdot & \cdot & \cdot & \cdot\\
\cdot & \cdot & \cdot & \cdot & \cdot & \cdot & \cdot & \cdot & \cdot\\ \hline
\cdot & \cdot & \cdot & \cdot & \cdot & \cdot & \cdot & \cdot & \cdot \\
\cdot & \cdot & \cdot & \cdot & \cdot & \cdot & \cdot & \cdot & \cdot \\
\cdot & \cdot & \cdot & \cdot & \cdot & a_{00}^{(1,2)} & \cdot & a_{01}^{(1,2)} & \cdot\\ \hline
\cdot & \cdot & \cdot & \cdot & \cdot & \cdot & \cdot & \cdot & \cdot\\
\cdot & \cdot & \cdot & \cdot & \cdot & a_{10}^{(1,2)} & \cdot & a_{11}^{(1,2)} & \cdot\\
\cdot & \cdot & \cdot & \cdot & \cdot & \cdot & \cdot & \cdot & \cdot
\end{array} \right).
\end{split}
\ee
\end{widetext}
Form the above examples we see that operators $\omega_k$ are supported on orthogonal subspaces. Now, we are ready to present five examples of private states which belong to our class:
\begin{enumerate}
\item Suppose that $\gamma^{\operatorname{V}} \in \mathcal{B}\left(\mathbb{C}^2 \otimes \mathbb{C}^2 \otimes \mathbb{C}^{d_s} \otimes \mathbb{C}^{d_s}\right)$ such that
\be
\gamma^{\operatorname{V}}=\frac{1}{2}\begin{pmatrix}\text{\(\mathds{1}\)}/d_s^2 & \cdot & \cdot & \operatorname{V}/d_s^2\\
\cdot & \cdot & \cdot & \cdot\\
\cdot & \cdot & \cdot & \cdot\\
\operatorname{V}/d_s^2 & \cdot & \cdot & \text{\(\mathds{1}\)}/d_s^2
\end{pmatrix},
\ee
where $\operatorname{V}=\sum_{i=0}^{d_s-1}|ij\>\<ji|$ is known as the swap operator, $\text{\(\mathds{1}\)}$ is the identity matrix of dimension $d_s^2 \times d_s^2$ and by dots we denote matrices of dimension $d_s^2 \times d_s^2$ filled with zeros~\cite{KH_phd}.
\item Suppose that $\rho_{\text{flower}} \in \mathcal{B}\left(\mathbb{C}^2 \otimes \mathbb{C}^2 \otimes \mathbb{C}^{d_s} \otimes \mathbb{C}^{d_s}\right)$ such that
\be
\label{flower}
\rho_{\text{flower}}=\frac{1}{2}\begin{pmatrix} \sigma & \cdot & \cdot & U^{\operatorname{T}}/d_s\\
\cdot & \cdot & \cdot & \cdot\\
\cdot & \cdot & \cdot & \cdot\\
U^*/d_s & \cdot & \cdot & \sigma
\end{pmatrix},
\ee
where $\sigma=(1/d_s)\sum_{i=0}^{d_s-1}|ii\>\<ii|$ is the classical maximally correlated state and $U$ is an embedding of unitary transformation $W=\sum_{i,j=0}^{d_s-1}w_{ij}|i\>\<j|$ in the form $U=\sum_{i,j=0}^{d_s-1}w_{ij}|ii\>\<jj|$. The state~\eqref{flower} is known as the flower state \cite{keyhuge}.

\item Suppose that $\rho \in \mathcal{B}\left(\mathbb{C}^2 \otimes \mathbb{C}^2 \otimes \mathbb{C}^{ld_s} \otimes \mathbb{C}^{ld_s}\right)$ such that
\be
\label{key}
\rho_{ABA'B'}=\frac{1}{2}\begin{pmatrix} p(\tau_0+\tau_1) & \cdot & \cdot & p(\tau_1-\tau_0)\\
\cdot & (1-2p)\tau_0 & \cdot & \cdot \\
\cdot & \cdot & (1-2p)\tau_0 & \cdot \\
p(\tau_1-\tau_0) & \cdot & \cdot & p(\tau_0+\tau_1)
 \end{pmatrix}.
\ee
In the above $\tau_0=\rho_s^{\ot l}, \tau_1=[(\rho_a+\rho_s)/2]^{\ot l}$, $l$ is a positive integer number, $\mathcal{B}(\mathbb{C}^{d_s})\ni\rho_s=\frac{2}{d_s^2+d_s}\operatorname{P}_{sym}$, $\mathcal{B}(\mathbb{C}^{d_s})\ni\rho_a=\frac{2}{d_s^2-d_s}\operatorname{P}_{as}$, where $\operatorname{P}_{sym}, \operatorname{P}_{as}$ are respectively symmetric and antisymmetric projectors  for bipartite case. It has been shown that a class of states~\eqref{key} is bound entangled with a private key $K_D >0$ \cite{keyhuge}.

\item Finally, let us take $\rho_{ABA'B'} \in \mathcal{B}\left(\mathbb{C}^2 \otimes \mathbb{C}^2 \otimes \mathbb{C}^{d_s} \otimes \mathbb{C}^{d_s}\right)$ in the most general form of pbit, the so-called $X-$form of pbit \cite{keyhuge}:
\be
\rho_{ABA'B'}=\frac{1}{2}\begin{pmatrix}\sqrt{XX^{\dagger}} & \cdot & \cdot & X\\ \cdot & \cdot & \cdot & \cdot \\ \cdot & \cdot & \cdot & \cdot \\ X^{\dagger} & \cdot & \cdot & \sqrt{X^{\dagger}X}\end{pmatrix},
\ee
where $X$ is an arbitrary operator with $||X||_1=1$ and dots represent zero matrices.
\item For a larger dimension of the key part, for example $d_k=3$, we can take $\rho_{ABA'B'} \in \mathcal{B}\left(\mathbb{C}^3 \otimes \mathbb{C}^3 \otimes \mathbb{C}^{d_s} \otimes \mathbb{C}^{d_s}\right)$ in the following way
\be
\rho_{ABA'B'}=\frac{1}{3}\begin{pmatrix}\sqrt{XX^{\dagger}} & \cdot & \cdot & \cdot & X & \cdot & \cdot & \cdot & XY\\
\cdot & \cdot & \cdot & \cdot & \cdot & \cdot & \cdot & \cdot & \cdot \\
\cdot & \cdot & \cdot & \cdot & \cdot & \cdot & \cdot & \cdot & \cdot \\
\cdot & \cdot & \cdot & \cdot & \cdot & \cdot & \cdot & \cdot & \cdot \\
X^{\dagger} & \cdot & \cdot & \cdot & \sqrt{X^{\dagger}X} & \cdot & \cdot & \cdot & Y\\
\cdot & \cdot & \cdot & \cdot & \cdot & \cdot & \cdot & \cdot & \cdot \\
\cdot & \cdot & \cdot & \cdot & \cdot & \cdot & \cdot & \cdot & \cdot \\
\cdot & \cdot & \cdot & \cdot & \cdot & \cdot & \cdot & \cdot & \cdot \\
(XY)^{\dagger} & \cdot & \cdot & \cdot & Y^{\dagger} & \cdot & \cdot & \cdot & \sqrt{Y^{\dagger}Y}
\end{pmatrix},
\ee
where matrices $X,Y$ satisfy $||X||_1=||Y||_1=1$ and $X=WY^{\dagger}$ for an arbitrary unitary transformation $W$~\cite{KH_phd}.
\end{enumerate}
From the above examples we can easily figure out explicit form of the operators $\omega_k$ in every case.

\section{Properties}
\label{III}
In this section we formulate theorem, which determines the distance in the trace norm between our set of states and the set of pdits in the maximally entangled form. Next, we show (Lemma~\ref{depOnds}) that this distance depends on the shield dimension $d_s$ for a special but quite general subclass of pdits. Namely, we show that this distance scales inversely with the shield dimension $d_s$. At the end we also calculate the trace distance from the set of separable states using a special representation of the pdit (Lemma~\ref{distSep}). In this and next sections,
without loss of generality, we assume the state $\rho_{ABA'B'}$ to be
\be
\rho_{ABA'B'}=p\gamma_0+\frac{q}{d}\sum_{i=1}^{d}\gamma_i,
\ee
where $p+q=1$, $d=\frac{1}{2}d_k(d_k-1)$ and
\be
\gamma_0=\frac{1}{\operatorname{Tr \omega_0}}\omega_0, \qquad \gamma_i=\frac{1}{\operatorname{Tr}\omega_i}\omega_i,
\ee
so such a state indeed belongs to the class defined in Sec.~\ref{II}, and state $\gamma_0$ we will call pdit in the maximally entangled form.  Now, we are ready to formulate the main results of this section.
\begin{lemma}
\label{thm:main}
Let us assume that we are given $\rho_{ABA'B'}$ as in Eq.~\eqref{1} and the pdit $\gamma_0$ in its maximally entangled form, then the following statement holds:
\be
\label{eq1:thm:main}
||\rho_{ABA'B'}-\gamma_0||_1=2q.
\ee
\end{lemma}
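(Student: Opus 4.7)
The plan is to exploit the block-orthogonal structure of the decomposition $\rho_{ABA'B'}=p\gamma_0+\frac{q}{d}\sum_{i=1}^d\gamma_i$. First I would write
\[
\rho_{ABA'B'}-\gamma_0 = -q\,\gamma_0 + \frac{q}{d}\sum_{i=1}^{d}\gamma_i,
\]
using $p=1-q$. The key step is then to argue that the operators $\gamma_0,\gamma_1,\dots,\gamma_d$ are positive semidefinite density operators whose supports are mutually orthogonal, so that the trace norm of any real linear combination reduces to the sum of the absolute values of its coefficients.

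For positivity I would invoke Observation~\ref{obs1}: since we are assuming $\rho_{ABA'B'}\geq 0$, the block matrices $A^{(i,j)}$ are all positive semidefinite, which in turn makes each $\omega_l\geq 0$ and hence each $\gamma_l$ a genuine state with $\tr\gamma_l=1$. For the orthogonality of supports, I would note that on the key subsystem $AB$ the term $\omega_0$ lives entirely on $\mathrm{span}\{|ii\rangle_{AB}\}$, while each $\omega_l$ with $l\geq 1$ associated to a pair $i<j$ lives on $\mathrm{span}\{|ij\rangle_{AB},|ji\rangle_{AB}\}$; these subspaces are pairwise orthogonal, and tensoring with the shield space preserves the orthogonality. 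This is already flagged in the paper in the sentence after the explicit $d_k=3$ expansion.

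From orthogonal supports one obtains at once
\[
\Bigl\lVert -q\,\gamma_0 + \frac{q}{d}\sum_{i=1}^{d}\gamma_i\Bigr\rVert_1
= q\,\lVert\gamma_0\rVert_1 + \frac{q}{d}\sum_{i=1}^{d}\lVert\gamma_i\rVert_1
= q + q = 2q,
\]
where I used the general fact that if $\{\sigma_k\}$ are density operators with pairwise orthogonal supports then $\bigl\lVert\sum_k c_k\sigma_k\bigr\rVert_1=\sum_k|c_k|$ (each $\sigma_k$ contributes a distinct diagonal block, whose singular values are the scaled eigenvalues $|c_k|\lambda$).

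I do not expect any real obstacle here; the only thing one has to double-check is that the orthogonality claim really holds for arbitrary $\omega_0$, including its off-diagonal $|i\rangle\langle j|\otimes|i\rangle\langle j|$ pieces with $i\neq j$. This is fine because those terms act between $|ii\rangle_{AB}$ and $|jj\rangle_{AB}$, which both lie in the diagonal subspace and are still orthogonal to every $\mathrm{span}\{|ij\rangle_{AB},|ji\rangle_{AB}\}$ with $i<j$. Everything else is a one-line computation.
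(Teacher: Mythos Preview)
Your proof is correct and follows essentially the same idea as the paper's: both rely on the fact that the $\gamma_l$ are positive states living on mutually orthogonal subspaces, so the trace norm splits additively. The paper presents this via a direct computation---writing $\|A-B\|_1=\operatorname{Tr}\sqrt{(A-B)^2}$, then replacing $(A-B)^2$ by $(A+B)^2$ (which silently uses $\gamma_0\gamma_i=0$), and taking the square root using positivity---whereas you invoke the block-diagonal structure explicitly; your version is if anything more transparent about what is actually being used.
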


\begin{proof}
The proof is based on straightforward calculations.
Let us compute the desired trace distance between $\rho_{ABA'B'}$ and $\gamma_0$:
\be
\begin{split}
||\rho_{ABA'B'}-\gamma_0||_1=\left| \left|p\gamma_0 +\frac{q}{d}\sum_{i=1}^d\gamma_i-\gamma_0 \right|\right|_1
=\left| \left|\frac{q}{d}\sum_{i=1}^d\gamma_i-q\gamma_0 \right|\right|_1=\frac{q}{d}\left| \left| \sum_{i=1}^d\gamma_i-d\gamma_0 \right|\right|_1.
\end{split}
\ee
Now, using the definition of trace norm we rewrite the last term from the above calculations in a more explicit way
\be
\begin{split}
||\rho_{ABA'B'}-\gamma_0||_1=\frac{q}{d}\operatorname{Tr}\left[\left(\sum_{i=1}^d\gamma_i-d\gamma_0\right)\left(\sum_{i=1}^d\gamma_i-d\gamma_0\right)^{\dagger}\right]^{1/2},
\end{split}
\ee
because we deal with hermitian matrices we have
\be
\begin{split}
||\rho_{ABA'B'}-\gamma_0||_1=\frac{q}{d}\operatorname{Tr}\left[\left(\sum_{i=1}^d\gamma_i+d\gamma_0\right)^2 \right]^{1/2},
\end{split}
\ee
and finally
\be
||\rho_{ABA'B'}-\gamma_0||_1=\frac{q}{d}\operatorname{Tr}\left[\sum_{i=1}^d\gamma_i+d\gamma_0 \right]=2q.
\ee
We obtain the statement of our theorem, so the proof is finished.
\end{proof}
Next, we formulate and prove the next lemma, which states that the distance between our class of states given in Sec.~\ref{II} and pdit in its maximally entangled form decreases with the dimension of the shield part $d_s$. We do it for a specific choice of operators $\omega_0, \omega_k$ given by Eqs~\eqref{2},~\eqref{3}, which gives a wide class of pdits. Let us choose all matrices $a_{ij}^{(0,0)}=a$, where $0\leq i,j \leq d_k$  in such a way that
\be
\label{a1}
\operatorname{spec}(a)=\left\{\frac{1}{d_s^2},\ldots,\frac{1}{d_s^2}\right\},
\ee
and all matrices $a^{(i,j)}_{mn}=b$, where $0\leq m,n \leq 1$ and $0\leq i,j \leq \frac{1}{2}d_k(d_k-1)$ with $i<j$ as:
\be
\label{a2}
\operatorname{spec}(b)=\left\{\frac{1}{d_s},\ldots,\frac{1}{d_s}\right\}.
\ee
We also assume that operators which have such  spectra are invariant under partial transposition with respect to the system $B'$. At this point we refer the reader to Appendix~\ref{XY} in which we show the explicit form of operators satisfying all requirements.
Using the above definitions we are ready to show the following
\begin{lemma}
\label{depOnds}
Let us consider the class of states given by
\be
\rho_{ABA'B'}=p\gamma_0+\frac{q}{d}\sum_{i=1}^{d}\gamma_i,
\label{rhobound}
\ee
where $q=1-p$, $d=\frac{1}{2}d_k(d_k-1)$ and states $\gamma_0, \gamma_i$ are given by Eqs~\eqref{2},~\eqref{3}, together with~\eqref{a1},~\eqref{a2}. Then the trace distance from the set of private dits in maximally entangled form is equal to
\be
\label{specDist}
\frac{1}{2}||\rho_{ABA'B'}-\gamma_0||_1=\frac{1}{1+\frac{d_s}{d_k-1}},
\ee
where $d_s$ is the dimension of the shield part and $d_k$ - the dimension of the key part.
\end{lemma}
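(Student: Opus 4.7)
The plan is to apply Lemma~\ref{thm:main} and then pin down the convex weight $q$ using the spectral data in \eqref{a1} and \eqref{a2}.

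First, I would invoke Lemma~\ref{thm:main}, which already gives $\|\rho_{ABA'B'}-\gamma_0\|_1 = 2q$. The whole argument therefore reduces to proving that, under \eqref{a1} and \eqref{a2}, the weight necessarily satisfies $q=(d_k-1)/(d_s+d_k-1)$, since $p+q=1$.

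Second, I would compute the traces of the unnormalized building blocks $\omega_l$ directly from the definitions \eqref{2}--\eqref{3}. Only the diagonal positions of each $\omega_l$ contribute to $\operatorname{Tr}$, so $\operatorname{Tr}(\omega_0) = d_k\,\operatorname{Tr}(a)$ (one copy of $a$ per diagonal block $a_{ii}^{(0,0)}$, and there are $d_k$ of them) and, for each $l\geq 1$ parameterised by a pair $(i,j)$ with $i<j$, $\operatorname{Tr}(\omega_l)=2\,\operatorname{Tr}(b)$. I would then evaluate $\operatorname{Tr}(a)$ and $\operatorname{Tr}(b)$ from the prescribed spectra \eqref{a1}, \eqref{a2}, reading off from the explicit $T_{B'}$-invariant realisation recalled in Appendix~\ref{XY} the values $\operatorname{Tr}(a)=1$ and $\operatorname{Tr}(b)=1/d_s$.

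Third, I would impose the normalisation $\operatorname{Tr}(\rho_{ABA'B'})=1$. Writing $\rho_{ABA'B'}=\sum_l\omega_l/T$, we obtain
\[
T = \operatorname{Tr}(\omega_0) + d\cdot 2\operatorname{Tr}(b) = d_k + \frac{d_k(d_k-1)}{d_s} = \frac{d_k(d_s+d_k-1)}{d_s},
\]
so $p = \operatorname{Tr}(\omega_0)/T = d_s/(d_s+d_k-1)$ and hence $q = (d_k-1)/(d_s+d_k-1) = 1/(1+d_s/(d_k-1))$. Substituting this into $\|\rho_{ABA'B'}-\gamma_0\|_1=2q$ from step one yields the stated identity.

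The main obstacle I foresee is the evaluation of $\operatorname{Tr}(a)$ and $\operatorname{Tr}(b)$ in the second step: the spectra in \eqref{a1}, \eqref{a2} only fix the eigenvalues that appear, not their multiplicities, so getting the right trace factors requires the explicit construction of partial-transpose-invariant operators developed in Appendix~\ref{XY} and organised via the tensor technique of Appendix~\ref{tenMet}. Without that careful choice of the underlying ranks, the ratio $p/q$ would not collapse to the clean $d_s/(d_k-1)$ dictated by the lemma, and the dependence on $d_s$ would pick up spurious powers.
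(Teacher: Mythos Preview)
Your argument has a genuine gap: you try to determine $q$ from the normalisation $\operatorname{Tr}(\rho_{ABA'B'})=1$, but $q$ is \emph{not} fixed by normalisation. The state $\rho_{ABA'B'}=p\gamma_0+\frac{q}{d}\sum_i\gamma_i$ is a convex combination of \emph{already normalised} states (indeed, Appendix~\ref{tenMet} records $\|a\|_1=\|b\|_1=1$, so $\gamma_0=\omega_0/d_k$ and $\gamma_i=\omega_i/2$ each have unit trace). Hence $\operatorname{Tr}(\rho_{ABA'B'})=1$ for \emph{every} choice of $p=1-q$, and your step three collapses. Relatedly, your trace evaluation is off: with the realisation of Appendix~\ref{XY} one has $\operatorname{Tr}(b)=\operatorname{Tr}\sqrt{YY^\dagger}=d_s\cdot\frac{1}{d_s}=1$, not $1/d_s$; plugging the correct traces into your own formula gives $p=1/d_k$, which disagrees with the lemma and confirms that normalisation is not the mechanism at work.

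What actually pins down $q$ in the paper is the \emph{PPT requirement}: these states are constructed to sit on the boundary of the PPT cone, and that is the missing constraint. The paper's proof invokes Appendix~\ref{tenMet}, where the partial-transpose positivity of the two block matrices $\operatorname{PT}_{d_k}$ and $\operatorname{PT}$ is reduced (via the decomposition with $\mathds{1}$ and the all-ones matrix $\mathbb{I}$) to the pair of inequalities $p\lambda(\widetilde a)-\frac{q}{d_k-1}\lambda(\widetilde b)\ge 0$ and $\frac{q}{d_k-1}\lambda(\widetilde b)-p\lambda(\widetilde a)\ge 0$, which together force the equality $\frac{q}{d_k-1}\lambda(b)=p\,\lambda(a)$ (using $T_{B'}$-invariance). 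Substituting $\lambda(a)=1/d_s^2$ and $\lambda(b)=1/d_s$ from \eqref{a1}--\eqref{a2} then yields $q=\frac{1}{1+d_s/(d_k-1)}$, after which Lemma~\ref{thm:main} finishes the job exactly as in your first step. So your opening move is right; the content you are missing is the PPT boundary condition, not a trace identity.
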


\begin{proof}
 We need to show that in our scheme the parameter $q$ which is equal to the trace distance between states $\rho_{ABA'B'}$ and pdits $\gamma_0$ in their maximally entangled form is equal to $1/(1+\frac{d_s}{d_k-1})$, where $d_s, d_k$ are dimensions of the shield and the key part respectively. To prove this property we use the construction described in details in Appendix~\ref{tenMet}. Because we have assumed that our matrices $a$ and $b$ are invariant under partial transposition with respect to the system $B'$ we can directly use equality from Eq.~\eqref{equality} putting instead of $\widetilde{a}$, a matrix $a$ and instead of $\widetilde{b}$, a matrix $b$. Then we have
 \be
 \frac{q}{d_k-1}\lambda\left(b\right)-p\lambda\left(a\right)=0,
 \ee
 where by $\lambda(a), \lambda(b)$ we denote nonzero eigenvalues of operators $a$ and $b$ respectively.
 Now using formulas~\eqref{a1} and~\eqref{a2} we get
 \be
\begin{split}
\frac{q}{d_k-1}\frac{1}{d_s}-p\frac{1}{d_s^2}=0.
\end{split}
\ee
Solving the above equality with $p=1-q$ we obtain the statement of our Lemma. This finishes the proof.
\end{proof}

Before we formulate next result we introduce the following notation
\begin{notation}
Suppose that we are given a quantum state $\rho$ and the set of separable states $\mathcal{SEP}$, then by $\operatorname{dist}(\rho,\mathcal{SEP})$ we understand the following quantity
\be
\operatorname{dist}(\rho,\mathcal{SEP})=\mathop{\min}\limits_{\sigma \in \mathcal{SEP}}||\rho-\sigma||_1,
\ee
which is of course double minimal trace distance. In further part of this manuscript whenever we talk about distance we mean the above notation.
\end{notation}

Now we are ready to calculate the lower bound on distance between the set of separable states denoted by $\mathcal{SEP}$ and our subclass of pdits given in the argumentation before Lemma~\ref{depOnds}.
\begin{lemma}
\label{distSep}
The distance between set of separable states $\mathcal{SEP}$ and class of states of the form
\be
\label{s2}
\rho_{ABA'B'}=p\gamma_0+\frac{q}{d}\sum_{i=1}^d\gamma_i,
\ee
where $q=1-p$ and $d=\frac{1}{2}d_k(d_k-1)$ is bounded form below:
\be
\label{ourbound}
\operatorname{dist}(\rho_{ABA'B'},\mathcal{SEP})\geq 2-\frac{2}{d_k}-\frac{2}{1+\frac{d_s}{d_k-1}},
\ee
where $d_s$ denotes the dimension of the shield part and the $d_k$ dimension of the key part.
\end{lemma}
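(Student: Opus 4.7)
The natural plan is to combine Lemma~\ref{depOnds} with the well-known distance of the maximally entangled state from the separable set, joined by the triangle inequality. Concretely, for any $\sigma\in\mathcal{SEP}$ I would write
\be
\|\rho_{ABA'B'}-\sigma\|_1 \;\geq\; \|\gamma_0-\sigma\|_1 \;-\; \|\gamma_0-\rho_{ABA'B'}\|_1,
\ee
then minimize over $\sigma$ and bound the two pieces separately.

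For the second term, Lemma~\ref{depOnds} (together with Lemma~\ref{thm:main}, which identifies the trace distance as $2q$) immediately yields $\|\gamma_0-\rho_{ABA'B'}\|_1 = 2/(1+\frac{d_s}{d_k-1})$, so this piece is handled without additional work. For the first term, the key observation is that with the spectral choices~\eqref{a1} the blocks $a_{ij}^{(0,0)}=a$ are normalized so that $\operatorname{Tr} a = 1$; tracing out the shield system $A'B'$ therefore reduces $\gamma_0$ to the maximally entangled state $|\phi_+\>\<\phi_+|_{AB}$ on the key part, with $|\phi_+\> = \frac{1}{\sqrt{d_k}}\sum_i |ii\>$. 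Since the partial trace is separability-preserving and contractive in trace norm,
\be
\|\gamma_0-\sigma\|_1 \;\geq\; \|\,|\phi_+\>\<\phi_+|-\operatorname{Tr}_{A'B'}\sigma\,\|_1,
\ee
and $\operatorname{Tr}_{A'B'}\sigma$ is separable on $AB$ whenever $\sigma$ is separable across the $AA':BB'$ cut.

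It then suffices to invoke the standard fact that the maximally entangled state has maximal overlap $1/d_k$ with any separable state on $\mathbb{C}^{d_k}\otimes\mathbb{C}^{d_k}$, which via the trace-norm/fidelity relation gives $\|\,|\phi_+\>\<\phi_+|-\sigma_{AB}\|_1 \geq 2-2/d_k$ for every separable $\sigma_{AB}$. Plugging the two bounds back into the triangle inequality and minimizing over $\sigma\in\mathcal{SEP}$ produces exactly~\eqref{ourbound}.

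I expect essentially no obstacle: the only points that require a brief verification are (i) that the trace normalization from~\eqref{a1} is used correctly so that $\operatorname{Tr}_{A'B'}\gamma_0 = |\phi_+\>\<\phi_+|$, and (ii) that partial trace across $A'B'$ maps $AA':BB'$-separable states to $A:B$-separable states — both are routine. The only genuinely non-trivial input is the lower bound $2-2/d_k$ on the distance of the maximally entangled state from $\mathcal{SEP}$, but this is a standard textbook fact that can be cited rather than re-derived.
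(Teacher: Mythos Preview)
Your proposal is correct and follows essentially the same route as the paper: triangle inequality combining $\operatorname{dist}(\gamma_0,\mathcal{SEP})\geq 2-2/d_k$ with Lemma~\ref{depOnds}. The only cosmetic difference is that the paper cites the $2-2/d_k$ bound from~\cite{Badziag_pbits} as a general fact about private states, whereas you rederive it for this particular $\gamma_0$ via the partial-trace reduction to $|\phi_+\>\<\phi_+|$; both are valid and yield the same inequality.
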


\begin{proof}
In our proof we use the fact that distance between an arbitrary private state $\bar{\gamma}$ and the set of separable states $\mathcal{SEP}$ is bounded from below~\cite{Badziag_pbits} by:
\be
\operatorname{dist}(\bar{\gamma},\mathcal{SEP})\geq 2-\frac{2}{d_k},
\ee
where $d_k$ is dimension of key part. Because the above bound holds for an arbitrary private state, it holds also for a pdit in its maximally entangled form $\gamma_0$. Now let us take the closest separable state $\omega$ to $\rho_{ABA'B'}$ given by Eq.~\eqref{s2}. Using the triangle inequality we can write
\be
\label{chain}
||\rho_{ABA'B'}-\omega||_1+||\rho_{ABA'B'}-{\gamma_0}||_1\geq ||\omega-\gamma_0||_1\geq \operatorname{dist}(\gamma_0, \mathcal{SEP})\geq 2-\frac{2}{d_k},
\ee
but from Lemma~\ref{depOnds} we know, that $||\rho_{ABA'B'}-\gamma_0||_1=\frac{2}{1+\frac{d_s}{d_k-1}}$, so
\be
||\rho_{ABA'B'}-\omega||_1+\frac{2}{1+\frac{d_s}{d_k-1}}\geq 2-\frac{2}{d_k}.
\ee
The above inequality directly implies that
\be
\operatorname{dist}(\rho_{ABA'B'},\mathcal{SEP})\geq 2-\frac{2}{d_k}-\frac{2}{1+\frac{d_s}{d_k-1}}.
\ee
\end{proof}
Let us notice  that for our special case $d_k=2$, when Alice and Bob share qubit states, the bound obviously improves with dimension of the shield part and has minimum for $d_s=2$, i.e. when Alice and Bob share four-qubit state.




Let us recall that the state from Lemma \ref{depOnds} can be considered as a PPT state acting on $\mathbb{C}^d \ot \mathbb{C}^d$, where $d=d_s d_k$. We can formulate the following, recovering the result from \cite{Badziag_pbits} and \cite{Beigi2010}
\begin{theorem}
For an arbitrary $\epsilon>0$ there exists a PPT state $\rho$ acting on the Hilbert space $\mathbb{C}^d \ot \mathbb{C}^d$ with $d\leq \frac{c}{\epsilon^3} $ such that:
\be
\operatorname{dist}(\rho,\mathcal{SEP})\geq 2-\epsilon,
\ee
where $c$ is constant. The sate is given by (\ref{rhobound}).
\end{theorem}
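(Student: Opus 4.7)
The plan is to treat the statement as a direct optimisation of the bound from Lemma~\ref{distSep}. That lemma already gives, for any state in the subfamily \eqref{rhobound} with flat spectra \eqref{a1},\eqref{a2}, the bound
\[
\operatorname{dist}(\rho_{ABA'B'},\mathcal{SEP})\geq 2-\frac{2}{d_k}-\frac{2}{1+\tfrac{d_s}{d_k-1}}.
\]
So the task reduces to choosing $d_k,d_s$ as small as possible so that the two subtracted terms together sum to at most $\epsilon$, and then bookkeeping the resulting total dimension $d=d_k d_s$.

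I would split the $\epsilon$ budget evenly, requiring $\tfrac{2}{d_k}\leq \tfrac{\epsilon}{2}$ and $\tfrac{2}{1+d_s/(d_k-1)}\leq \tfrac{\epsilon}{2}$. The first inequality forces $d_k\geq 4/\epsilon$; with $d_k$ fixed, the second forces $d_s\geq (d_k-1)(4/\epsilon-1)$. Taking $d_k=\lceil 4/\epsilon\rceil$ and $d_s$ the smallest integer satisfying the second constraint yields $d_k=O(1/\epsilon)$ and $d_s=O(d_k/\epsilon)=O(1/\epsilon^2)$, hence $d=d_k d_s=O(1/\epsilon^3)$. This is exactly the promised scaling, and one reads off an explicit constant (of order a few tens) from the calculation. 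A different split of the budget, e.g.\ taking $d_k$ slightly larger, does not improve the cubic exponent because the second term contains the product $d_s/(d_k-1)$.

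The one remaining item is to verify PPT on the bipartite cut $\mathbb{C}^d\otimes\mathbb{C}^d=(\mathbb{C}^{d_k}\otimes\mathbb{C}^{d_s})\otimes(\mathbb{C}^{d_k}\otimes\mathbb{C}^{d_s})$, i.e.\ with Alice holding $AA'$ and Bob holding $BB'$. This is exactly the cut handled by Observation~\ref{obs1}(2): PPT is equivalent to positivity of each block $\widetilde{A}^{(i,j)}$. The blocks $a,b$ prescribed in Lemma~\ref{depOnds} are, by assumption and by the explicit realisation in Appendix~\ref{XY}, invariant under partial transpose on $B'$, so $\widetilde{a}=a\ge0$ and $\widetilde{b}=b\ge0$, whence the required positivity is immediate.

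I do not anticipate a real obstacle: the theorem is essentially a repackaging of Lemmas~\ref{depOnds} and~\ref{distSep} together with the elementary balancing of two rational functions of $d_k$ and $d_s$. If anything deserves care, it is only the bookkeeping verifying that the $AA':BB'$ cut identified in $\mathbb{C}^d\otimes\mathbb{C}^d$ coincides with the four-partite cut used throughout Section~\ref{III}, so that the PPT hypothesis built into the construction carries over without change.
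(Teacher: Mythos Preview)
Your approach is correct and is precisely what the paper has in mind: the paper itself merely states that the proof is ``straightforward and based on simple calculations'' and records $c<64$, so your filling-in via Lemma~\ref{distSep} and a balancing of the two error terms is exactly the intended argument.

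One small correction to your last paragraph: the PPT justification is too hasty. Positivity of $\widetilde a$ and $\widetilde b$ alone does \emph{not} imply positivity of the block matrices $\widetilde A^{(i,j)}$ from Observation~\ref{obs1}(2), since those blocks carry $\widetilde a$ and $\widetilde b$ both on and off the diagonal (cf.\ \eqref{Aux3},\eqref{Aux4}). PPT holds rather because the weight $q$ in Lemma~\ref{depOnds} is \emph{chosen} to satisfy the equality~\eqref{equality} derived in Appendix~\ref{tenMet}, which is exactly the binding PPT constraint; that is what makes the state lie on the PPT boundary. Incidentally, your even split of the $\epsilon$-budget produces a leading constant of $64$; an unequal split (allotting a fraction $2/3$ of the budget to the $2/d_k$ term) lowers it to about $54$, consistent with the paper's claim $c<64$.
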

The proof is straightforward and based on simple calculations, so it is not reported here. We have found analytically that constant $c<64$. This result considerably improves  the bound obtained in \cite{Badziag_pbits}.

\section{Summary}
In this paper, we present the construction of the set of pdits, which contains many known examples of private states from the literature (Section~\ref{II}). We also present the result specifying the trace distance between our set of pdits and the pdit in the maximally entangled form. Next, we connect this result with a dimension of the shield part $d_s$, and we prove that this distance is inversely proportional to $d_s$ at least for a particular subclass of pdits.
We also calculate the trace distance from the set of separable states $\mathcal{SEP}$ and show that for a fixed dimension of key part $d_k$ this distance decreases with $d_s$. The most interesting property of our new class of states, which differs it from the known results is that we do not need many copies of them (see ~\cite{Badziag_pbits}) to boost distance from the set of separable states $\mathcal{SEP}$ (Section~\ref{III}). We also provide explicit calculations of a family of states such that we recover the $2 - \epsilon$ distance from $\mathcal{SEP}$ \cite{Beigi2010}, \cite{Badziag_pbits} in a natural and basic way. Finally, we show that the scaling of $\epsilon$ with the distance is  $d \propto 1/\epsilon^3$, and it is considerably better than  $d \propto 2^{(log(4/\epsilon))^2}$  from  \cite{Badziag_pbits} .

\section{Acknowledgments}
M.H. thanks S. Szarek for valuable  discussions. Part of this work was done in National Quantum Information Center of Gdansk. M.S. and P.\'C. thank the hospitality of IBM TJ Watson Research Center, where (another) part of this work was done and acknowledge helpful discussions with Graeme Smith and John Smolin about private states and private bits. M.S. is supported by the International PhD Project "Physics of future quantum-based information technologies": grant MPD/2009-3/4 from Foundation for Polish Sciences. M.S. and P.\'C. acknowledge the support from the National Science Centre project Maestro DEC-2011/02/A/ST2/00305. A.R is supported by a postdoc internship, decision number DEC-2012/04/S/ST2/00002, from the (Polish) National Science Center. M.H. is supported by
Polish Ministry of Science and Higher Education Grant no. IdP2011 000361.

\appendix

\section{Construction of special pdits subclass}
\label{tenMet}
In this section we describe the method which we have used to obtain explicit positivity conditions in the proof of the Lemma~\ref{depOnds} for an arbitrary dimension of the key part $d_k$. Our argumentation is made for the specific subclass of states given at the begin of Section~\ref{II}. Suppose that above mentioned subclass is in the following form
\be
\label{LAux2}
\rho_{ABA'B'}=p\gamma_0+\frac{q}{d}\sum_{i=1}^d \gamma_i\in \mathcal{B}\left(\mathcal{H}_{d_k} \otimes \mathcal{H}_{d_k} \otimes \mathcal{H}_{d_s} \otimes \mathcal{H}_{d_s}\right),
\ee
where $d=\frac{1}{2}d_k(d_k-1)$ and  matrices $\gamma_0$, $\gamma_i$ are defined on orthogonal subspaces in the same similar way as in~\eqref{2},~\eqref{3}.
Of course to satisfy $\rho_{ABA'B'} \geq 0$ we need $\gamma_0 \geq 0$ and $\gamma_i \geq 0$.  In our construction operator $\gamma_0$ corresponds with~\eqref{2}, but all $a^{(0,0)}_{ij}=a$ together with $||a||_1=1$. Similarly we proceed for the matrices $\gamma_i$ by putting all submatrices $a^{(i,j)}_{mn}$ equal to $b$ with $||b||_1=1$. Thanks to this we have explicit connection between states $\gamma_0, \gamma_i$ and $\omega_0, \omega_i$ from~\eqref{2},~\eqref{3} by the following formulas
\be
\label{gammY}
\gamma_0=\frac{1}{d_k}\omega_0, \qquad \gamma_i=\frac{1}{2}\omega_i, \qquad \text{where} \qquad d=\frac{1}{2}d_k(d_k-1).
\ee
It is easy to see that to ensure $\operatorname{PPT}$ property respect to partial transposition on $BB'$ it is enough to satisfy $\operatorname{PPT}$ condition for every component of~\eqref{LAux2} separately after partial transposition. Thanks to this and property of orthogonality we can write
\be
\label{Aux3}
\left( \text{\(\mathds{1}\)}_{A} \otimes \operatorname{T}_B \otimes \text{\(\mathds{1}\)}_{A'} \otimes \operatorname{T}_{B'}\right) \rho_{ABA'B'} \geq 0 \Leftrightarrow \operatorname{PT}_{d_k}=\begin{pmatrix}p\widetilde{a}& \frac{q}{d_k-1}\widetilde{b} & \cdots & \frac{q}{d_k-1}\widetilde{b}\\
 \frac{q}{d_k-1}\widetilde{b} & p\widetilde{a} & \cdots & \frac{q}{d_k-1}\widetilde{b}\\
 \vdots & & \ddots & \vdots \\
 \frac{q}{d_k-1}\widetilde{b} & \cdots & p\widetilde{b} & \frac{q}{d_k-1}\widetilde{a} \end{pmatrix} \geq 0,
\ee
and
\be
\label{Aux4}
\left( \text{\(\mathds{1}\)}_{A} \otimes \operatorname{T}_B \otimes \text{\(\mathds{1}\)}_{A'} \otimes \operatorname{T}_{B'}\right) \rho_{ABA'B'} \geq 0 \Leftrightarrow \operatorname{PT}=\begin{pmatrix} \frac{q}{d_k-1}\widetilde{b} & p\widetilde{a}\\ p\widetilde{a} & \frac{q}{d_k-1}\widetilde{b}
 \end{pmatrix} \geq 0,
\ee
where $\widetilde{a}, \widetilde{b}$ are operators $a,b$ after partial transposition respect to subsystem $B'$, and second condition is taken $d_k$ times.

In general, still it is hard to say are constraints~\eqref{Aux3} and~\eqref{Aux4} satisfied, but there is nice mathematical trick which allows us to rewrite above condition in more operative way. Namely matrices $\operatorname{PT}_{d_k}$ and $\operatorname{PT}$ can be written as
\be
\begin{split}
\operatorname{PT}_{d_k}&=\text{\(\mathds{1}\)}_{d_k} \otimes p \widetilde{a} - \text{\(\mathds{1}\)}_{d_k} \otimes \frac{q}{d_k-1} \widetilde{b} + \mathbb{I}_{d_k}\otimes \frac{q}{d_k-1} \widetilde{b} \geq 0, \\
\operatorname{PT}&=\text{\(\mathds{1}\)}_{2} \otimes \frac{q}{d_k-1} \widetilde{b} - \text{\(\mathds{1}\)}_{2} \otimes p \widetilde{a} + \mathbb{I}_{2}\otimes p \widetilde{a} \geq 0,
\end{split}
\ee
where $\text{\(\mathds{1}\)}_{d_k}, \text{\(\mathds{1}\)}_{2}$ are identity matrices of dimensions $d_k$ and 2 respectively, $\mathbb{I}_{d_k}$ and $\mathbb{I}_{2}$ with all entries equal to 1 of dimensions $d_k$ and 2 respectively. To say that $\operatorname{PT}_{d_k}$ and $\operatorname{PT}$ are positive is enough to say that they have all eigenvalues $\lambda$ greater or equal to zero, so we can write:
\be
\begin{split}
\lambda\left(\operatorname{PT}_{d_k}\right)&=\lambda\left(\text{\(\mathds{1}\)}_{d_k} \otimes p \widetilde{a}\right)-\lambda\left(\text{\(\mathds{1}\)}_{d_k} \otimes \frac{q}{d_k-1} \widetilde{b}\right)+\lambda\left(\mathbb{I}_{d_k}\otimes \frac{q}{d_k-1} \widetilde{b}\right) \geq 0,\\
\lambda\left(\operatorname{PT}\right)&=\lambda\left(\text{\(\mathds{1}\)}_{2} \otimes \frac{q}{d_k-1} \widetilde{b}\right) - \lambda\left(\text{\(\mathds{1}\)}_{2} \otimes p \widetilde{a}\right) + \lambda\left(\mathbb{I}_{2}\otimes p \widetilde{a}\right) \geq 0.
\end{split}
\ee
Because $\operatorname{spec}\left(\mathbb{I}_{d_k}\right)=\left\{0,\ldots,0,d_k\right\}$, where 0 is taken $d_k-1$ times we have the following set of constraints
\be
\label{formulas}
\begin{split}
p\lambda\left(\widetilde{a}\right)+q\lambda(\widetilde{b}) \geq 0,\\
p\lambda\left(\widetilde{a}\right)-\frac{q}{d_k-1}\lambda(\widetilde{b})\geq 0,\\
\frac{q}{d_k-1}\lambda(\widetilde{b})+p\lambda\left(\widetilde{a}\right) \geq 0,\\
\frac{q}{d_k-1}\lambda(\widetilde{b})-p\lambda\left(\widetilde{a}\right)\geq 0.
\end{split}
\ee
Form the above we see that only nontrivial conditions are given by the second and fourth inequality, which are reduced (together) to equality
\be
\label{equality}
\frac{q}{d_k-1}\lambda(\widetilde{b})-p\lambda\left(\widetilde{a}\right)=0.
\ee
We see that to ensure $\operatorname{PPT}$ property, it is enough to satisfy only one constrain, which depends only on eigenvalues of submatrices of $\gamma_0$ and $\gamma_i$.
\section{Construction of the operators with specific constraints on spectra}
\label{XY}

In  the Section~\ref{II} we use a class of operators with the specific properties such that invariance respect to partial transposition on $B'$ system and particular spectra. Now, we present one of the possible realization of such operators. Namely, let us take (see~\cite{Badziag_pbits})
\be
\label{defXY}
X=\frac{1}{d_s\sqrt{d_s}}\sum_{i,j=1}^{d_s}u_{ij}|ij\>\<ji| , \ Y=\sqrt{d_s}X^{\operatorname{T}_{B'}}=\frac{1}{d_s}\sum_{i,j=1}^{d_s}u_{ij}|ii\>\<jj|,
\ee
where $u_{ij}$ are matrix elements of some unitary matrix $U\in \operatorname{M}(d_s \times d_s, \mathbb{C})$ with $|u_{ij}|=\frac{1}{\sqrt{d_s}}$.
It is easy to see that $\left(\text{\(\mathds{1}\)}_{B} \otimes \operatorname{T}_{B'} \right)X=X$ and $\left(\text{\(\mathds{1}\)}_{B} \otimes \operatorname{T}_{B'} \right)Y=Y$. Moreover, we can prove the following

\begin{fact}
\label{aux1}
Matrices $\sqrt{XX^{\dagger}}$ and $\sqrt{YY^{\dagger}}$, where $X,Y$ are given by the formula~\eqref{defXY} satisfy:
\be
\operatorname{spec}\left(\sqrt{XX^{\dagger}}\right)=\left\{\frac{1}{d_s^2},\ldots,\frac{1}{d_s^2}\right\},  \ \operatorname{spec}\left(\sqrt{YY^{\dagger}}\right)=\left\{\frac{1}{d_s},\ldots,\frac{1}{d_s},0,\ldots,0\right\},
\ee
where $d_s$ denotes dimension of the shield part, and for every matrix we have $d_s$ eigenvalues. Moreover, multiplicity of $1/d_s^2$ is equal to $d_s^2$, multiplicity of $1/d_s$ is equal to $d_s$ and finally, multiplicity of zeros is equal to $d_s(d_s-1)$.
\end{fact}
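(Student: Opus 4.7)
The plan is to verify the claimed spectra by computing $XX^{\dagger}$ and $YY^{\dagger}$ directly from the definitions in \eqref{defXY}, then taking square roots. Both objects turn out to be diagonal (or near-diagonal) in the computational basis of $\mathcal{H}_{d_s}\otimes \mathcal{H}_{d_s}$, so no deep structural argument is needed; the only subtleties are keeping indices straight and using each hypothesis ($|u_{ij}|=1/\sqrt{d_s}$ and unitarity of $U$) at the right moment.

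First I would compute $X^{\dagger}=\frac{1}{d_s\sqrt{d_s}}\sum_{ij}\bar u_{ij}|ji\>\<ij|$ and multiply. Writing
\be
XX^{\dagger}=\frac{1}{d_s^{3}}\sum_{i,j,k,l}u_{ij}\bar u_{kl}\,|ij\>\<ji|lk\>\<kl|,
\ee
the middle factor $\<ji|lk\>=\delta_{jl}\delta_{ik}$ collapses the sum to $\frac{1}{d_s^{3}}\sum_{ij}|u_{ij}|^{2}\,|ij\>\<ij|$. At this point I invoke the hypothesis $|u_{ij}|^{2}=1/d_s$, obtaining $XX^{\dagger}=\frac{1}{d_s^{4}}\,\text{\(\mathds{1}\)}_{d_s^{2}}$. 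Hence $\sqrt{XX^{\dagger}}=\frac{1}{d_s^{2}}\text{\(\mathds{1}\)}_{d_s^{2}}$, giving the first spectrum with multiplicity $d_s^{2}$.

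For $Y$ the same scheme applies, but now the middle inner product is $\<jj|ll\>=\delta_{jl}$ (only a single Kronecker delta survives because both tensor factors of $|jj\>$ carry the same label), so
\be
YY^{\dagger}=\frac{1}{d_s^{2}}\sum_{i,k}\Bigl(\sum_{j}u_{ij}\bar u_{kj}\Bigr)|ii\>\<kk|.
\ee
Here I use unitarity of $U$: $\sum_{j}u_{ij}\bar u_{kj}=(UU^{\dagger})_{ik}=\delta_{ik}$, which collapses the expression to $\frac{1}{d_s^{2}}\sum_{i}|ii\>\<ii|$. Taking the square root gives $\sqrt{YY^{\dagger}}=\frac{1}{d_s}\sum_{i}|ii\>\<ii|$; this is a rank-$d_s$ projector-like operator, whose spectrum consists of the eigenvalue $1/d_s$ on the $d_s$-dimensional subspace spanned by $\{|ii\>\}$ and $0$ on the complementary subspace of dimension $d_s(d_s-1)$, exactly matching the multiplicities claimed.

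Nothing in this argument poses a real obstacle; the computation is entirely mechanical. The only point worth flagging is that the two halves of the statement use different hypotheses on $U$: the spectrum of $\sqrt{XX^{\dagger}}$ needs only that $U$ be a complex Hadamard (i.e.\ $|u_{ij}|=1/\sqrt{d_s}$), whereas the spectrum of $\sqrt{YY^{\dagger}}$ needs only unitarity. This separation also makes clear why invariance under the partial transposition on $B'$ (which follows immediately from the structure $|ij\>\<ji|$ and $|ii\>\<jj|$ of the summands) is an independent property stated separately before the Fact, and need not enter the spectral computation.
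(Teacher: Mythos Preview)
Your proof is correct and follows essentially the same approach as the paper: both compute $XX^{\dagger}$ and $YY^{\dagger}$ directly and then take square roots. Your version is more explicit about the index bookkeeping, while the paper states the result of that computation as an ``observation'' and then adds a (somewhat redundant) remark that the rescaled operators $d_s^2X$ and $d_s\sqrt{d_s}\,Y$ are unitary on their supports; but the substantive content is the same. Your closing remark distinguishing which hypothesis on $U$ (Hadamard-type moduli vs.\ unitarity) is used for which half of the statement is a nice clarification not made explicit in the paper.
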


\begin{proof}
The proof is based on the following observation
\be
XX^{\dagger}=X^{\dagger}X=\frac{1}{d_s^4}, \quad YY^{\dagger}=Y^{\dagger}Y=\frac{1}{d_s^2}.
\ee
Let us redefine $X$ and $Y$ introducing $\widetilde{X}=d_s^2X$ and $\widetilde{Y}=d_s\sqrt{d_s}Y$.  We have  that$\widetilde{X}\widetilde{X}^{\dagger}=\widetilde{X}^{\dagger}\widetilde{X}=\text{\(\mathds{1}\)}$ and similarly for $\widetilde{Y}$.
Thanks to this we see that matrices $\widetilde{X}, \widetilde{Y}$ are unitary, so their eigenvalues are $\operatorname{e}^{\operatorname{i}\varphi_i}$, for $i=1,\ldots, d_s$. Now it is easy to deduce that
\be
\label{cc1}
\operatorname{spec}\left(X\right)=\left\{\frac{1}{d_s^2}\operatorname{e}^{\operatorname{i}\varphi_1},\ldots,\frac{1}{d_s^2}\operatorname{e}^{\operatorname{i}\varphi_{d_s}}\right\},  \ \operatorname{spec}\left(Y\right)=\left\{\frac{1}{d_s}\operatorname{e}^{\operatorname{i}\varphi_1},\ldots,\frac{1}{d_s}\operatorname{e}^{\operatorname{i}\varphi_{d_s}}\right\},
\ee
and
\be
\label{cc2}
\operatorname{spec}\left(XX^{\dagger}\right)=\left\{\frac{1}{d_s^4},\ldots,\frac{1}{d_s^4}\right\},  \ \operatorname{spec}\left(YY^{\dagger}\right)=\left\{\frac{1}{d_s^2},\ldots,\frac{1}{d_s^2}\right\}.
\ee
In equations~\eqref{cc1} and~\eqref{cc2} for simplicity we have omitted zeros in the spectra of $\operatorname{spec}(Y)$ and $\operatorname{spec}(YY^{\dagger})$. Moreover, they do not give us any nontrivial condition for positivity (see Section~\ref{tenMet}). Finally for $\sqrt{XX^{\dagger}}, \sqrt{YY^{\dagger}}$ we simply have to take square roots from every eigenvalue from above spectra to obtain the desired result.
\end{proof}

Now, in Lemma~\ref{depOnds} we can directly substitute $\sqrt{XX^{\dagger}}$ instead of $a_{kl}^{(0,0)}$, where $0 \leq k,l \leq d_k-1$ and $\sqrt{YY^{\dagger}}$ instead of $a^{(i,j)}_{mn}$, where $0 \leq m,n \leq 1$ and $0 \leq i,j \leq d_k-1$ for $i<j$ we obtain the specific example of pdit from our class.

\newpage
\bibliographystyle{plain}

\end{document}